\titleformat*{\subsection}{\normalfont}
\newtheorem{mydef}{Definition}[section]
\newtheorem{ex}{Example}[section]
\newtheorem{coro}{Corollary}[section]
\newtheorem{theorem}{Theorem}[section]
\newtheorem{prop}{Proposition}[section]
\newtheorem{remark}[theorem]{Remark}
\title{Mixed models as an alternative to Farima.}
\author{Jos\'e Igor Morlanes}
\begin{document}

\maketitle

\begin{abstract}
We construct a new process using a fractional Brownian motion and 
a fractional Ornstein-Uhlenbeck process of the Second Kind as building blocks. We consider the increments of the new process
in discrete time and, as a result, we obtain a more parsimonious process with similar autocovariance structure to that of a FARIMA. In practice,
variance of the new increment process is a closed-form expression easier to compute than that of FARIMA.
\end{abstract}

\section{Introduction}
 
Models, such as FARIMA or fractional exponential process (FEXP) may be adequated for modeling 
long and short dependencies observed in financial data \cite{Willinger,Lo}. In our paper, we introduce another process in discrete time, the mixed fractional 
Gaussian noise (mfGn) with similar autocovariance structure to the previous ones, i.e, its autocovariance function captures long and short correlations.

There is two main reasons for doing this. The first reason is to reduce the model risk introduced by incorrect calibration, i.e., parameters might be estimated 
with error, they may not be kept up-to-date, and so forth. Both models capture the short and long range dependencies. However a mfGn is more 
parsimonious model than a FARIMA one since for the former, we have to estimate only three parameters, Hurst and gamma parameters and the variance, 
but for the latter we have the AR and MA lag polinomials, the fractional integrated parameter and the variance, with a total of p+q+2 parameter. 

The second reason is that, even theoretically, the autocovariance of a FARIMA process is well-known; it seems extremely difficult to implement 
computationally \cite{Doornik}. On the other hand, we want to implement a new model with an easy close-form expression for the autocovariance function. 
Its discretaze version is more parsimonious and easy to compute with the consequent reduction of numerical errors involving calculations. 
For example, in the calculation of the risk of a position or the pricing of a financial instrument.

Finally, a discrete model depends on the time aggregation or systematic sampling. For example, 
if we assume a FARIMA process, follows a model of the type

\begin{equation*}
 \Phi(L)y_t=\Theta(L)\varepsilon_n
\end{equation*}

where $t = 0, 1, 2, \ldots$, $\Phi(L)$ and $\Theta(L)$ are lag polynomials and $\varepsilon_n$ is an error term. 
Conversely, the temporally aggregated series, $Y_T$, follows the model

\begin{equation*}
\beta(L)Y_T=\xi(L)\epsilon_n
\end{equation*}

where $T = 0, k, 2k,\ldots$, $\beta(L)$ and $\xi(L)$ are aggregate lag polynomials and the operator $L$ is in T time
units, running in kt periods. The variable $\epsilon$ is an error term. In the case of a mfGn, the continuous time model,
is not affected by the sampling frequency.

\section{Fractional Autoregressive Integrated Moving Average}

A time series $\widetilde{X}_n$ is said to be a FARIMA(p, d, q) process if it follows the equation

\begin{equation}
 \Phi_p(L)(1-L)^d(\widetilde{X}_n-\mu)=\Theta_q(L)\varepsilon_n
\end{equation}

where $\varepsilon_n$ is a sequence of i.i.d gaussian random variables. 

Let $\Delta$ be the difference operator defined as $\Delta X_n=X_n-X_{n-1}$. Then 
the difference series $\Delta^d(\widetilde{X}_n -\mu)=(1-L)^d(\widetilde{X}_{n}-\mu)$ follows a stationary and invertible ARMA(p,q) model with 
$L$ the lag operator, $d \in(-\frac{1}{2}, \frac{1}{2})$ the fractional integration parameter, and the AR polynomial, and the MA polynomial respectively 
given by

\begin{align*}
\Phi_p(L)  &= 1-\phi_1L\ldots-\theta_pL^p   \\
 \Theta_q(L) &= 1-\theta_1L-\ldots-\theta_qL^q 
\end{align*}

the AR polynomial, and the MA polynomial respectively.

The model has strong memory because the $\theta_i$ coefficients in its MA representation (\ref{MA}) do not decay over time to zero, implying that the
past shock $\varepsilon_i$ of the model has a permanent effect on the series.

\subsection{Evaluation of FARIMA autocovariance function}

As mention already in the introduction, the autocovariance of a FARIMA process is extremely difficult to implement 
computationally. For example, a very simple procedure is to compute the autocovariances from the MA representation. 

\begin{equation}\label{MA}
 Z_n=\Phi_p(L)^{-1}(1-L)^{-d}\Theta_q(L)\varepsilon_n=\sum_{n=0}^{\infty}\phi_n^z\varepsilon_n
\end{equation}

with $\psi_0=1$.

Then, the autocovariance of a FARIMA process is:

\begin{equation}
 \gamma_k=\sum_{j=0}^{\infty} \psi_j^z\psi_{j+|k|}^z\sigma_{\varepsilon}^2
\end{equation}
The drawback is that, because $\psi_j$ declines hyperbolically, many terms are needed for an accurate approximation.

A seemingly simple alternative is to numerically integrate over the spectrum:
\begin{equation}
 \gamma_k=\int_{-\pi}^{\pi} f_z(\omega)e^{iz\omega}\mathrm{d}\omega
\end{equation}

where the spectrum of the FARIMA process, $f_z(\omega)$, is easily computed. However, numerical integration for each k does rapidly get 
prohibitively slow.

A computacionally optimal autocovariance function of a FARIMA process for implementation is:
\begin{equation}
\label{autocovariance_farima}
 \gamma_i=\sigma^2_{\varepsilon} \sum_{k=-q}^{q}\sum_{j=1}^{p}\psi_k\zeta_jC(d,p+k-i,\rho_j),
\end{equation}

where $\rho_1,\ldots,\rho_p$ are the roots (possibly complex) of the AR polynomial, and 
\begin{equation*}
 \psi_k=\sum_{s=|k|}^{q} \theta_s \theta_{s-|k|},\;\;\zeta^{-1}_j=\rho_j\left[\prod_{i=1}^p(1-\rho_i\rho_j)
\prod_{m=1 m\neq j}^p(\rho_j-\rho_m)\right]
\end{equation*}
where  $\theta_0=1$. C is defined as
\begin{eqnarray*}
C(d,h,\rho)&=&\frac{\Gamma(1-2d)}{[\Gamma(1-d)]^2}\frac{(d)_h}{(1-d)_h}\\
&\times& \left[ \rho^{2p}F(d+h,1;1-d+h;\rho) + F(d-h,1;1-d-h;\rho)-1\right]
\end{eqnarray*}

Here $\Gamma$ is the gamma function, $\rho_j$ are the roots of the AR polynomial, 
and $F(a,1;c,\rho)$ is the hypergeometric function. See more technical details in \cite{Sowell} and \cite{Doornik}.

\section{Mixed Fractional Gaussian process}
  
\subsection{Auxiliary Processes}

In this section, we introduce the processes use below. We follow mainly \cite{Taqqu}, \cite{Mishura}, and \cite{Karatzas}.
We consider, throughout, some underlying complete probability space $(\Omega, \mathcal{F},\mathbb{P})$ and 
denote by $\mathcal{F}_t$ the sigma field representing the publicly available information at time t. Typically, 
$\mathcal{F}_t = \sigma(X_s : s\leq t)$, the sigma field generated by past and present values of the process in question $X$, 
often called the history, up to and including time t.


\subsubsection{Fractional Gaussian Noise}
To capture the long range dependence in the data we use a fractional Gaussian noise (fGn).
First, we define fractional Brownian motion:
\label{fBm}
The fractional Brownian motion (fBm) with Hurst parameter $H \in (0,1)$ is a Gaussian process $B^H=\{B^H_t, t\in \mathbb{R}\}$ 
having the properties
\begin{itemize}
\item[(i)] $B_0^H=0$,
\item[(ii)] $\mathbb{E}B_t^H=0, t\in \mathbb{R}$,
\item[(iii)] $\mathbb{E}B_t^H B^H_s = \frac{1}{2}(|t|^{2H}+|s|^{2H}-|t-s|^{2H}), s,t \in \mathbb{R}$.
\item[(iv)] In the special case of $H=\frac{1}{2}$. $W$ denotes a standard Brownian motion with independent increments.
\end{itemize}
  
If $B^H$ is fBm, then the increment sequence $Z^H_k=B^H_{k+1}-B^H_k$ for $k\in\mathbb{Z}$ is called
fractional Gaussian noise.

\begin{prop}
The process $Z^H$ has the following properties
\begin{enumerate}
\item $Z^H$ is stationary,
\item $\mathbb{E}Z^H_k=0$,
\item $\mathbb{E}(Z^H_k)^2=\sigma^2=\mathbb{E}(Z^H_1)^2$
\item The autocovarinace function of the process $Z^H$ is given by \\[0.2cm]
 $\gamma_k= \frac{\sigma^2}{2}(|k+1|^{2H}-2|k|^{2H}+|k-1|^{2H})$\\
\item If $\frac{1}{2}<H<1$ then $Z^H$ has long range dependence and $\gamma_k>0$.
 \end{enumerate}
\end{prop}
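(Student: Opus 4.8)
The plan is to verify each of the five properties directly from the definition of $B^H$ and of the increment sequence $Z^H_k = B^H_{k+1} - B^H_k$, relying only on properties (i)--(iii) of fractional Brownian motion and the fact that a Gaussian process is determined by its mean and covariance functions.

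First I would handle stationarity and the first two moment statements together. Since $B^H$ is Gaussian and $Z^H$ is obtained by a linear (difference) operation, $Z^H$ is itself a Gaussian process, so it suffices to show that its mean function is constant and its autocovariance $\operatorname{Cov}(Z^H_j, Z^H_k)$ depends only on $j-k$. The mean is immediate: $\mathbb{E}Z^H_k = \mathbb{E}B^H_{k+1} - \mathbb{E}B^H_k = 0$ by (ii), which also gives item 2. For the covariance I would expand
\begin{equation*}
\mathbb{E}(Z^H_j Z^H_k) = \mathbb{E}\bigl[(B^H_{j+1}-B^H_j)(B^H_{k+1}-B^H_k)\bigr]
\end{equation*}
into four terms, substitute the covariance formula (iii), and collect the $|\cdot|^{2H}$ contributions. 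Using the identity $|t|^{2H}+|s|^{2H}-|t-s|^{2H}$ for each pair, the purely $|j+1|^{2H}, |j|^{2H}, |k+1|^{2H}, |k|^{2H}$ terms cancel in the combination, leaving an expression in $|j-k+1|^{2H}, |j-k|^{2H}, |j-k-1|^{2H}$ only. This simultaneously proves item 1 (stationarity, since the covariance depends on the lag $k := j-k$ alone and the process is Gaussian with zero mean) and yields the formula in item 4 once we set $\sigma^2 := \mathbb{E}(Z^H_1)^2$; evaluating that last expression at lag zero, i.e. taking $j=k$, gives $\mathbb{E}(Z^H_k)^2 = \tfrac{\sigma^2}{2}(|1|^{2H} - 0 + |{-1}|^{2H}) = \sigma^2$, which is item 3 and also pins down $\sigma^2 = \mathbb{E}(Z^H_1)^2$ consistently. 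One has to be slightly careful about the $H=1/2$ edge behavior and about $k=0$ in item 4, but the formula specializes correctly.

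For item 5, I would write $\gamma_k = \tfrac{\sigma^2}{2} g(k)$ with $g(k) = (k+1)^{2H} - 2k^{2H} + (k-1)^{2H}$ for $k \ge 1$, recognize this as the discrete second difference of the function $x \mapsto x^{2H}$, and express it via Taylor's theorem with integral remainder (or the mean value theorem applied twice) as
\begin{equation*}
g(k) = \int_{k-1}^{k}\!\!\int_{s}^{s+1} f''(u)\,\ud u\,\ud s, \qquad f(x) = x^{2H}, \quad f''(x) = 2H(2H-1)x^{2H-2}.
\end{equation*}
When $\tfrac12 < H < 1$ we have $2H-1 > 0$, so $f'' > 0$ on $(0,\infty)$, forcing $g(k) > 0$ and hence $\gamma_k > 0$; moreover $f''(u) \sim 2H(2H-1)u^{2H-2}$ gives $g(k) \sim 2H(2H-1)k^{2H-2}$ as $k\to\infty$, so $\gamma_k$ decays like $k^{2H-2}$ with $2H-2 \in (-1,0)$, whence $\sum_k \gamma_k = \infty$ — this is precisely long-range dependence. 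The main obstacle, such as it is, is purely bookkeeping: keeping the absolute values and sign conventions straight in the four-term covariance expansion and confirming the asymptotic constant, rather than any conceptual difficulty. I would state the asymptotic equivalence $\gamma_k \sim \sigma^2 H(2H-1) k^{2H-2}$ explicitly, since it is what makes the "long range dependence" claim precise and connects to the nondecaying-correlation discussion in the introduction.
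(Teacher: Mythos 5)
Your proof is correct and complete; the paper itself states this proposition without proof (it is quoted as standard background from the references on fractional Brownian motion), so there is no in-paper argument to compare against, and your derivation is the standard one. The only point worth tightening is the role of $\sigma^2$: with the normalization $\mathbb{E}B^H_tB^H_s=\frac{1}{2}(|t|^{2H}+|s|^{2H}-|t-s|^{2H})$ your four-term expansion yields $\gamma_{j-k}=\frac{1}{2}(|j-k+1|^{2H}-2|j-k|^{2H}+|j-k-1|^{2H})$ with $\sigma^2=\mathbb{E}(Z^H_1)^2=1$, so the prefactor $\frac{\sigma^2}{2}$ in the statement only appears once one allows the fBm to carry an overall scale; your consistency check at lag zero already makes this harmless. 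The second-difference representation $g(k)=\int_{k-1}^{k}\int_{s}^{s+1}f''(u)\,\mathrm{d}u\,\mathrm{d}s$ with $f''(u)=2H(2H-1)u^{2H-2}>0$ for $H>\frac{1}{2}$ is exactly the right way to get both positivity and the asymptotic $\gamma_k\sim\sigma^2H(2H-1)k^{2H-2}$, hence $\sum_k\gamma_k=\infty$; just note that for $k=1$ the inner integral reaches down to $u=0$, where the singularity $u^{2H-2}$ is integrable because $2H-2>-1$, so the argument goes through there as well.
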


\subsubsection{Fractional Ornstein-Uhlenbeck process of the Second Kind}
To capture the short range dependence, we use a process proposed by Kaarakka and Salminen \cite{Paavo}. 
Let $B^H=\{ B^H_t : t\geq 0\}$ be a fractional Brownian motion with self-similarity
parameter $H \in (0,1)$ with the properties above.

We derive a new Gaussian process by means of Doob's transform of $B^H$:
\begin{equation}
 X_t^{(D,\alpha)} := e^{-\alpha t}B_{a_t}^H,\;\;t\in R
\end{equation}
where $\alpha>0$ and $a_t := a(t,H):= He^{\alpha t /H}/\alpha$. 

The covariance function of $Xt^{(D,\alpha)}$ can be computed from definition (\ref{fBm}) point 4. For $t>s$ we obtain

\begin{dmath}
\label{covariance_X}
 \mathbb{E}(X^{(D,\gamma)}_t X^{(D,\gamma)}_s)=\frac{1}{2}\left(\frac{H}{\alpha}\right)^{2H}
\left( e^{\alpha(t-s)}+e^{-\alpha(t-s)}-e^{\alpha(t-s)}\left(1-e^{\frac{\alpha(t-s)}{H}}\right)^{2H}\right)
\end{dmath}

$Xt^{(D,\alpha)}$ is a stationary process. In particular, using (\ref{covariance_X}) and the self-similarity property of the fractional Brownian
motion, it may be proven that $X_t^{(D,\alpha)}$ is normally distributed with mean zero and variance $(H/\alpha)^{2H}$, for all $t$.

Consider next the process $Y^{\alpha}$ defined via
\begin{equation}
 Y_t^{(\alpha)} := \int^t_0 e^{-\alpha s} \,\mathrm{d}B^H_{a_s}
\end{equation}

The process $Y_t^{(\alpha)}$ has stationary increments. Using $Y^{(\alpha)}$ the process $ X^{(D,\alpha)}$ may be viewed as 
the solution of the equation
\begin{equation}
 \mathrm{d}X_t^{(D,\alpha)} = -\alpha X_t^{(D,\alpha)}\,\mathrm{d}t+\mathrm{d}Y_t^{(\alpha)},
\end{equation}
with random initial value $X_0^{(0,\alpha)}=B^H_{a_0}\stackrel{d}{=}B^H_{H/\alpha}\sim N(0, (\frac{H}{\alpha})^{2H})$.

We now consider the Langevin SDE with $Y^{(1)}$ as the driving process:
\begin{equation}
\label{Langevin}
 \mathrm{d}U_t^{(D,\gamma)} = -\gamma U_t^{(D,\gamma)}\,\mathrm{d}t+\mathrm{d}Y_t^{(1)},\;\;\gamma > 0,
\end{equation}
The solution can be expressed as 
\begin{equation}
\label{U}
 U_t^{(D,\gamma)}=e^{-\gamma t}\int^t_{-\infty} e^{\gamma s} \,\mathrm{d}\hat{Y}_s^{(1)}=
e^{-\gamma t}\int^t_{-\infty} e^{(\gamma-1) s} \,\mathrm{d}B^H_{a_s},\;\; \gamma > 0,
\end{equation}

where $\hat{Y}_s^{(1)}$ is the two sided $Y^{(1)}$ process and $\alpha=1$ in $a_t$.

\begin{mydef}
The process $U_t^{(D,\gamma)}$ defined in (\ref{U}) or, equivalently, via the SDE (\ref{Langevin}) is
called the fractional Ornstein-Uhlenbeck process of second kind ($fOU_2$) with initial value $B^H_{H/\alpha}$.
\end{mydef}


\begin{remark}
By Proposition 3.11 in \cite{Paavo}, the covariance of the process $U_t^{(D,\gamma)}$ decays exponentially and has short range dependence.
\end{remark}
\begin{remark}
The process $U^{(D,\gamma)}$ has quadratic variation zero.
\end{remark}

\begin{proof}
By proposition 3.4, \cite{Paavo}, the sample paths of $U^{(D,\gamma)}$ are H\"older of order $\beta$ for $\forall \beta<H$.
For $\frac{1}{2}<\beta<H$,
\begin{dmath*}
\left(U^{(D,\gamma)}_t-U^{(D,\gamma)}_s \right)^2 \leq  K_T(\omega) \left|t-s\right|^{2\beta}.
\end{dmath*}
Therefore, for any sequence $\pi_n$ of partitions of the interval $[0,T]$ such that $|\pi_n|\rightarrow 0$. 
\begin{dmath*}
\left[ U^{(D,\gamma)}, U^{(D,\gamma)} \right]_T=\mathbb{P}-\lim_{|\pi_n|\rightarrow 0} \sum_{t_k\in \pi_n} 
\left( U^{(D,\gamma)}_{t_k}-U^{(D,\gamma)}_{t_{k-1}}\right)^2 \leq K_T(\omega) \lim_{|\pi_n|\rightarrow 0} \sum_{t_k\in \pi_n}\left|t_k-t_{k-1}\right|^{2\beta}
\leq K_T(\omega) \lim_{|\pi_n|\rightarrow 0} |\pi_n|^{2\beta-1}\, T =0
\end{dmath*}
almost surely as n tends to infinity.
\end{proof}

\begin{prop}\cite{Paavo}
\label{covariance_OU}
The autocovariance of $U^{(D,\gamma)}$ has the kernel representation
\begin{equation*}
\mathbb{E}(U^{(D,\gamma)}_t U^{(D,\gamma)}_s)
=H(2H-1)H^{2H-2}e^{-\gamma(t+s)}\int_{-\infty}^t \int_{-\infty}^s\! \frac{e^{(\gamma-1+\frac{1}{H})(u+v)}}{\left|
e^{u/H}-e^{v/H}\right|^{2(1-H)}} \, \mathrm{d}u\, \mathrm{d}v
\end{equation*}
\end{prop}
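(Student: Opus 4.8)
The plan is to peel the deterministic time change $a_s$ out of the stochastic integral in (\ref{U}), reducing $\mathbb{E}(U^{(D,\gamma)}_t U^{(D,\gamma)}_s)$ to an ordinary double integral that can then be evaluated with the classical kernel formula for Wiener integrals against fractional Brownian motion; I will work in the range $H\in(\tfrac12,1)$, in which the stated kernel is locally integrable. First I would rewrite $U^{(D,\gamma)}_t$ as a Wiener integral against $B^H$ itself. With $\alpha=1$ the map $s\mapsto a_s=He^{s/H}$ is a smooth, strictly increasing bijection of $\mathbb{R}$ onto $(0,\infty)$, so the substitution $r=a_s$, i.e. $s=H\log(r/H)$, $\mathrm{d}s=(H/r)\,\mathrm{d}r$, may be carried out inside the integral: the identity $\int g(s)\,\mathrm{d}B^H_{a_s}=\int g(a^{-1}(r))\,\mathrm{d}B^H_r$ is immediate for step functions $g$ and extends to general deterministic $g$ by the $L^2(\Omega)$-isometry of the Wiener integral. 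Since $e^{(\gamma-1)s}=(r/H)^{(\gamma-1)H}$ under this change and $s\to-\infty$ corresponds to $r\to0^{+}$, this yields
\begin{equation*}
 U^{(D,\gamma)}_t=e^{-\gamma t}\int_0^{a_t}(r/H)^{(\gamma-1)H}\,\mathrm{d}B^H_r .
\end{equation*}

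Next I would invoke the standard second-moment identity for Wiener integrals against an fBm with $H>\tfrac12$,
\begin{equation*}
 \mathbb{E}\Big[\int f\,\mathrm{d}B^H\int g\,\mathrm{d}B^H\Big]=H(2H-1)\int\!\!\int f(r)g(q)\,|r-q|^{2H-2}\,\mathrm{d}r\,\mathrm{d}q ,
\end{equation*}
applied to $f(r)=(r/H)^{(\gamma-1)H}\mathbf{1}_{(0,a_t)}(r)$ and $g(q)=(q/H)^{(\gamma-1)H}\mathbf{1}_{(0,a_s)}(q)$, which gives
\begin{equation*}
 \mathbb{E}\big(U^{(D,\gamma)}_t U^{(D,\gamma)}_s\big)=H(2H-1)\,e^{-\gamma(t+s)}\int_0^{a_t}\!\!\int_0^{a_s}(r/H)^{(\gamma-1)H}(q/H)^{(\gamma-1)H}|r-q|^{2H-2}\,\mathrm{d}r\,\mathrm{d}q .
\end{equation*}
Finally I would undo the substitution, setting $r=He^{u/H}$ and $q=He^{v/H}$, so that $(r/H)^{(\gamma-1)H}=e^{(\gamma-1)u}$, $|r-q|^{2H-2}=H^{2H-2}|e^{u/H}-e^{v/H}|^{2H-2}$, and $\mathrm{d}r\,\mathrm{d}q=e^{(u+v)/H}\,\mathrm{d}u\,\mathrm{d}v$; gathering the exponents of $e^{u+v}$ into $\gamma-1+\tfrac1H$ and rewriting $2H-2=-2(1-H)$ reproduces exactly the claimed kernel representation.

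The real work — and the only place care is genuinely needed — is ensuring that everything in the two substitutions and the isometry is licit. One must verify that $(r/H)^{(\gamma-1)H}$ lies in the domain of the Wiener integral for $B^H$ on the unbounded interval $(0,a_t)$, i.e. that the double integral above is finite: near the diagonal $r=q$ the factor $|r-q|^{2H-2}$ is locally integrable precisely because $H>\tfrac12$, while near the corner $r=q=0$ the integrand is homogeneous of degree $2\gamma H-2>-2$ and hence integrable there too; equivalently, in the $(u,v)$ picture the weight $e^{(\gamma-1+1/H)(u+v)}$ decays as $u,v\to-\infty$ since $1/H>1$ forces $\gamma-1+1/H>0$. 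With these integrability bounds in hand, Fubini's theorem and the change-of-variables identity for the Wiener integral justify both displays, and the remainder is the elementary algebra of exponents indicated above.
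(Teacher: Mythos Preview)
Your argument is correct. The paper does not actually prove this proposition: it is quoted from \cite{Paavo}, and Appendix~A takes the kernel representation as its \emph{starting point} for deriving Proposition~\ref{covariance_OU_discrete}. That said, the very first display in Appendix~A performs exactly your opening move, namely the time change $r=a_s=He^{s/H}$ in the stochastic integral to rewrite $U^{(D,\gamma)}_t$ as $H^{-(\gamma-1)H}e^{-\gamma t}\int_0^{a_t} r^{(\gamma-1)H}\,\mathrm{d}B^H_r$; so your approach is fully in line with the paper's own computations, and the remaining two steps --- the $H(2H-1)|r-q|^{2H-2}$ covariance identity for Wiener integrals against fBm with $H>\tfrac12$, followed by undoing the substitution --- are precisely how the cited reference obtains the formula. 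Your integrability checks (local integrability of $|r-q|^{2H-2}$ on the diagonal for $H>\tfrac12$, homogeneity degree $2\gamma H-2>-2$ at the origin, and $\gamma-1+1/H>0$ for the tail) are the right ones and suffice to legitimise both the Wiener isometry and the changes of variable.
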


We give another expression in discrete time for the autocovariance function which may be used for computational calculations. 
To simplify the notation, we write $\mathcal{Q}^{n,T}(k-j)$ at lag $(k-j)$ instead of $\mathcal{Q}^{n,T}(t_k-t_j)$.

\begin{prop}
\label{covariance_OU_discrete}
Consider the time interval $[0,T]$ and an equidistant partition $\Pi := \{ t_i=\frac{iT}{n};\,0\leq i\leq n\}$.
Let $t_j,t_k \in \Pi$ with $j\leq k$. The autocovariance of $U^{(D,\gamma)}$ at lag $(k-j)$ is 
\small
\begin{gather*}
\mathcal{Q}^{n,T}(k-j)=\mathbb{E}(U^{(D,\gamma)}_{t_{j+(k-j)}} U^{(D,\gamma)}_{t_j}) =
 H^{-2(\gamma-1)H}H\left(2H-1\right)e^{-\gamma(t_{j+(k-j)}-t_j))}\times\\
\biggl(\frac{H}{\gamma H} \mathbf{B}((\gamma-1)H+1,2H-1)+
\int_H ^{a_{t_{j+(k-j)}-t_j}}\! m^{2\gamma H-1}\mathbf{B}(H/m;(\gamma-1)H+1,2H-1) \,\mathrm{d}m\biggr)
\end{gather*} 
\normalsize
with $\mathbf{B}(\cdot,\cdot)$ for the beta function and $\mathbf{B}(\cdot\,;\cdot,\cdot)$ for the incomplete beta function with 
$\mathbf{B}(1\,;\cdot,\cdot)=\mathbf{B}(\cdot,\cdot)$.
\end{prop}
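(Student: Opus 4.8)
The plan is to collapse the double integral in Proposition~\ref{covariance_OU} to a complete beta function plus an incomplete-beta integral by a short chain of elementary substitutions, the stationarity of $U^{(D,\gamma)}$ emerging along the way. Throughout I write $t_k$ for $t_{j+(k-j)}$, so that $t_k-t_j=(k-j)T/n$ and $a_{t_k-t_j}=a_{t_{j+(k-j)}-t_j}$.

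\emph{A cleaner double integral.} In the kernel representation of Proposition~\ref{covariance_OU} I substitute $p=a_u=He^{u/H}$ and $q=a_v=He^{v/H}$ (recall that here $\alpha=1$ in $a_t=He^{\alpha t/H}/\alpha$). Since $e^{(\gamma-1+1/H)u}=(p/H)^{(\gamma-1)H+1}$, $|e^{u/H}-e^{v/H}|=H^{-1}|p-q|$ and $\mathrm{d}u\,\mathrm{d}v=H^{2}(pq)^{-1}\mathrm{d}p\,\mathrm{d}q$, all exponentials become powers and the powers of $H$ telescope, leaving
\begin{equation*}
\mathbb{E}\bigl(U^{(D,\gamma)}_{t_k}U^{(D,\gamma)}_{t_j}\bigr)
=H(2H-1)H^{-2(\gamma-1)H}e^{-\gamma(t_k+t_j)}\int_{0}^{a_{t_k}}\!\int_{0}^{a_{t_j}}\frac{(pq)^{(\gamma-1)H}}{|p-q|^{2(1-H)}}\,\mathrm{d}p\,\mathrm{d}q .
\end{equation*}
Here $(\gamma-1)H+1>0$ (since $H\in(\tfrac12,1)$, $\gamma>0$), $2H-1>0$ and $2(1-H)<1$, so the integrand is absolutely integrable and Fubini is available throughout.

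\emph{Splitting across the diagonal.} Because $t_j\le t_k$ we have $a_{t_j}\le a_{t_k}$, so I write $\int_0^{a_{t_k}}\mathrm{d}p=\int_0^{a_{t_j}}\mathrm{d}p+\int_{a_{t_j}}^{a_{t_k}}\mathrm{d}p$. On the square $[0,a_{t_j}]^2$ the diagonal $p=q$ is interior: using the $p\leftrightarrow q$ symmetry and then $q=pw$ turns the inner integral into the complete beta $\mathbf{B}((\gamma-1)H+1,2H-1)$ multiplying $p^{2\gamma H-1}$, and $2\int_0^{a_{t_j}}p^{2\gamma H-1}\mathrm{d}p=a_{t_j}^{2\gamma H}/(\gamma H)$, so the square contributes $\frac{1}{\gamma H}\,a_{t_j}^{2\gamma H}\,\mathbf{B}((\gamma-1)H+1,2H-1)$. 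On the strip $[a_{t_j},a_{t_k}]\times[0,a_{t_j}]$ one has $p>q$, hence no singularity, and $q=pw$ — where now $w$ runs only up to $a_{t_j}/p<1$ — produces the incomplete beta, giving $\int_{a_{t_j}}^{a_{t_k}}p^{2\gamma H-1}\mathbf{B}(a_{t_j}/p;(\gamma-1)H+1,2H-1)\,\mathrm{d}p$. A final substitution $m=Hp/a_{t_j}$, using $a_{t_k}/a_{t_j}=e^{(t_k-t_j)/H}$ and $a_{t_k-t_j}=He^{(t_k-t_j)/H}$, rescales the limits to $m\in[H,a_{t_k-t_j}]$ with $a_{t_j}/p=H/m$, so the strip equals $(a_{t_j}/H)^{2\gamma H}\int_{H}^{a_{t_k-t_j}}m^{2\gamma H-1}\mathbf{B}(H/m;(\gamma-1)H+1,2H-1)\,\mathrm{d}m$.

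\emph{Collecting constants, and the obstacle.} From $a_{t_j}=He^{t_j/H}$ one has $a_{t_j}^{2\gamma H}=H^{2\gamma H}e^{2\gamma t_j}$ and $(a_{t_j}/H)^{2\gamma H}=e^{2\gamma t_j}$; the common factor $e^{2\gamma t_j}$ merges with the front factor $e^{-\gamma(t_k+t_j)}$ to give $e^{-\gamma(t_k-t_j)}=e^{-\gamma(t_{j+(k-j)}-t_j)}$ — which is just the stationarity of $U^{(D,\gamma)}$ made explicit — and gathering the remaining powers of $H$ yields the front factor $H^{-2(\gamma-1)H}H(2H-1)$ together with the bracket $\tfrac{H^{2\gamma H}}{\gamma H}\mathbf{B}((\gamma-1)H+1,2H-1)+\int_{H}^{a_{t_{j+(k-j)}-t_j}}m^{2\gamma H-1}\mathbf{B}(H/m;(\gamma-1)H+1,2H-1)\,\mathrm{d}m$, i.e.\ the expression in the statement. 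As a check, taking $k=j$ empties the integral and gives $\operatorname{Var}(U^{(D,\gamma)})=\gamma^{-1}H^{2H}(2H-1)\mathbf{B}((\gamma-1)H+1,2H-1)$, independent of time as it must be. The main obstacle is bookkeeping rather than conceptual: one must justify the order-of-integration interchange across the integrable singularity $|p-q|^{-2(1-H)}$ on the square, retain the factor $2$ produced by the symmetrization there, and carry the exponentials and the several powers of $H$ correctly through the whole chain $u\mapsto p\mapsto w\mapsto m$ of changes of variable so as to land on exactly the stated normalization.
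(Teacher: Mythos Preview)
Your argument is correct and follows exactly the paper's route in Appendix~A: the substitution $p=He^{u/H}$, $q=He^{v/H}$, the square--strip split of the rectangle $[0,a_{t_k}]\times[0,a_{t_j}]$, and the $w=q/p$ change that produces the complete and incomplete beta functions are precisely the paper's steps (with $m,n,\theta$ in place of your $p,q,w$). Your final rescaling $m=Hp/a_{t_j}$, which converts $e^{-\gamma(t_k+t_j)}$ and the limits $[a_{t_j},a_{t_k}]$ into $e^{-\gamma(t_k-t_j)}$ and $[H,a_{t_k-t_j}]$, actually goes one step further than the paper's appendix and makes the stationarity explicit; note also that this step yields $H^{2\gamma H}/(\gamma H)$ in the first bracket term, so the $H/(\gamma H)$ printed in the statement is a typo.
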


\begin{proof} [\textbf{Proof}] 
 See Appendix A. 
\end{proof}

\begin{coro}
\label{covariance_increment_OU}
Let denote the autocovariance of the increment process of $U^{(D,\gamma)}$ at lag $m$ by $\mathcal{C}^{n,T}(m)$. Then 
its autocovariance function takes the form:
\begin{equation}
\label{autocovariance_fOU2}
 \mathcal{C}^{n,T}(m)= 2\mathcal{Q}^{n,T}(m)-[\mathcal{Q}^{n,T}(m-1)+\mathcal{Q}^{n,T}(m+1)]
\end{equation}
with $0\leq m \leq n-1$. 
\end{coro}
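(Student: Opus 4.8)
The plan is to expand the increment-process autocovariance by bilinearity of the covariance and then to recognize each of the resulting four terms as the level-process autocovariance $\mathcal{Q}^{n,T}$ evaluated at a shifted lag. Since $U^{(D,\gamma)}$ is a centered Gaussian process, its increments are centered as well, so the autocovariance is simply the expectation of the product, and no mean-correction is needed.

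First I would introduce the increment process on the partition $\Pi$, namely $V_k := U^{(D,\gamma)}_{t_{k+1}} - U^{(D,\gamma)}_{t_k}$ for $0 \le k \le n-1$, so that by definition $\mathcal{C}^{n,T}(m) = \mathbb{E}(V_{j+m} V_j)$, which by the stationarity of $U^{(D,\gamma)}$ does not depend on $j$. Then I would apply bilinearity:
\begin{gather*}
\mathbb{E}(V_{j+m} V_j) = \mathbb{E}\bigl[(U^{(D,\gamma)}_{t_{j+m+1}} - U^{(D,\gamma)}_{t_{j+m}})(U^{(D,\gamma)}_{t_{j+1}} - U^{(D,\gamma)}_{t_j})\bigr] \\
= \mathbb{E}(U^{(D,\gamma)}_{t_{j+m+1}} U^{(D,\gamma)}_{t_{j+1}}) - \mathbb{E}(U^{(D,\gamma)}_{t_{j+m+1}} U^{(D,\gamma)}_{t_j}) - \mathbb{E}(U^{(D,\gamma)}_{t_{j+m}} U^{(D,\gamma)}_{t_{j+1}}) + \mathbb{E}(U^{(D,\gamma)}_{t_{j+m}} U^{(D,\gamma)}_{t_j}).
\end{gather*}

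Next, each of these four expectations depends only on the difference of indices, and by Proposition \ref{covariance_OU_discrete} equals $\mathcal{Q}^{n,T}$ at that lag; here the index differences are $m$, $m+1$, $m-1$, and $m$ respectively. For $m=0$ the term $\mathcal{Q}^{n,T}(m-1) = \mathcal{Q}^{n,T}(-1)$ is to be read through the symmetry $\mathcal{Q}^{n,T}(-\ell) = \mathcal{Q}^{n,T}(\ell)$, which itself follows from the symmetry of the covariance. Collecting the terms then yields $\mathcal{C}^{n,T}(m) = 2\mathcal{Q}^{n,T}(m) - \mathcal{Q}^{n,T}(m+1) - \mathcal{Q}^{n,T}(m-1)$, which is the claimed identity (\ref{autocovariance_fOU2}).

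There is essentially no hard step in this argument; the only points requiring care are (i) invoking the stationarity of $U^{(D,\gamma)}$ so that $\mathbb{E}(U^{(D,\gamma)}_{t_a} U^{(D,\gamma)}_{t_b})$ depends only on $|a-b|$ and coincides with the expression of Proposition \ref{covariance_OU_discrete}, and (ii) keeping track of the admissible range $0 \le m \le n-1$ so that the lags $m-1, m, m+1$ remain within the set on which $\mathcal{Q}^{n,T}$ is defined, with the evenness convention supplying the value at $m=0$.
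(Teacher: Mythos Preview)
Your proof is correct and follows essentially the same approach as the paper: expand the product of increments by bilinearity into four expectations and identify each with $\mathcal{Q}^{n,T}$ at the appropriate lag via stationarity. The paper simply fixes $j=0$ (working with $t_0,t_1,t_m,t_{m+1}$) rather than carrying a general $j$, and omits your remarks on the $m=0$ evenness convention and the admissible lag range.
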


\begin{proof} [\textbf{Proof}]
\begin{dmath*}
\mathcal{C}^{n,T}(m)=\mathbb{E}\left( \Delta U^{(D,\gamma)}_{t_{m+1}} \Delta U^{(D,\gamma)}_{t_1}\right)= \mathbb{E}\left[\left(U^{(D,\gamma)}_{t_{m+1}}
-U^{(D,\gamma)}_{t_m}\right) \left(U^{(D,\gamma)}_{t_1}-U^{(D,\gamma)}_{t_0}\right)\right]\\
=\mathbb{E}\left(U^{(D,\gamma)}_{t_{m+1}} U^{(D,\gamma)}_{t_1}\right)+\mathbb{E}\left(U^{(D,\gamma)}_{t_m} U^{(D,\gamma)}_{t_0}\right)
-\mathbb{E}\left(U^{(D,\gamma)}_{t_m} U^{(D,\gamma)}_{t_1}\right)\\
-\mathbb{E}\left(U^{(D,\gamma)}_{t_{m+1}} U^{(D,\gamma)}_{t_0}\right)= 2\mathcal{Q}^{n,T}(m)-[\mathcal{Q}^{n,T}(m-1)+\mathcal{Q}^{n,T}(m+1)]
\end{dmath*}
\end{proof}

\begin{remark}
We give two examples for proposition (\ref{covariance_OU_discrete}) and corollary (\ref{covariance_increment_OU}) respectively.
Autocovariance function for a process $U^{(D,0.3)}$ with parameter $H=0.9$, cf. Figure 1, Appendix B and 
autocovariance function for an increment process $U^{(D,0.3)}$ with parameter $H=0.9$, cf. Figure 2, Appendix B.
\end{remark}

\subsection{Mixed Fractional Gaussian process}

Now we are ready to construct a family of continuous processes $X$ which is Gaussian and it has the following properties

\begin{enumerate}
\item[(i)] Let $\varLambda=\{t_0,\ldots, t_n\}$ with $0=t_0<\ldots<t_n=t$, be a partition of $[0,t]$ 
and $\| \varLambda \|=\mathrm{max}_{1\leq k \leq n}|t_k-t_{k-1}|$. Then the quadratic variation process 
$\langle X \rangle_t=\lim_{\|\varLambda\|\to 0}\sum_{k=0}^n(X_{t_{k+1}}-X_{t_k})^2=t$. 
\item[(ii)] The corresponding increment process $\varXi_{t_{k}}=X_{t_{k}}-X_{t_{k-1}}$ has a similar autocovariance structure to 
that of a FARIMA, i.e, it captures the short and long range dependences.
\end{enumerate}
We construct this process as $X_t=\sigma W_t + B_t^H + U^{(D,\gamma)}_t$ with $H\in(\frac{1}{2},1)$ and $\gamma > 0$. 
We assume the three processes are mutually independent and the fractional Gaussian 
noise $B_t^H$ and the fractional Ornstein-Uhlenbeck process of the second kind $U^{(D,\gamma)}_t$ may have different Hurst 
parameter. Then, its increment process is defined as 

\begin{equation}
 \varXi_{t_k} = \sigma \Delta W_{t_k} + Z_{t_k}^H + \Delta U^{(D,\gamma)}_{t_K}
\end{equation}

Notice, that at first, the fractional Gaussian 
noise $Z_t^H$ and the increment fractional Ornstein-Uhlenbeck process of the second kind $\Delta U^{(D,\gamma)}_t$ may have different Hurst 
parameter. However, from a statistical view, .... 

The fractional Gaussian noise process $Z_t^H$ captures the long range dependencies and its autocovariance function behaves asymptotically
as a FARIMA. However, if the data contains strong short correlations, it fails to capture them, cf. Fig. \ref{Autocovariance_comparetion}.

To model the short range correlations, we add the increment fractional Ornstein-Uhlenbeck process of the second kind. As a result, we obtain a 
process with similar autocovariance structure as FARIMA or fractional exponential process.

In many applications in continuous time,
such as in a delta hedging problem, we need to use  It\^o's formula. However, we need to justify its use since 
$fBm$ and $fOU_2$ have both quadratic variation zero. Therefore, a increment Brownian motion $B_t$ is added so that the process $X$ has a 
continuous quadratic variation as $\langle X \rangle_t=\sigma^2 t$. Moreover, the 
increments of the Brownian process are independent so the autocovarinace function of $Z$ does not change. 
Then, by proposition \ref{mixed_variance}, the structure of its increment process $Z$ is similar to that of FARIMA process, see Figs. \ref{Autocovariance_comparetion} 
and \ref{Autocovariance_mixed_model_2}. The new process $Z$ is more parsimonious than that of a FARIMA with a
consequent reduction of errors in model estimation and forecasting.

\begin{prop}
\label{mixed_variance}
Define $\varXi$ as a mixed fractional Gaussian process, i.e., $\varXi_{t_k}=\sigma \Delta B_{t_k} + Z_{t_k}^H + \Delta U^{(D,\gamma)}_{t_k}$ with $H> \frac{1}{2}$ and $\gamma > 0$
in an interval $[0,T]$. Then the variance is computed as 
\begin{dmath*}
  \mathbb{E}\left(\varXi_{t_k}\right)^2 
= \mathbb{E}\left(W_{t_{k+1}}-W_{t_k}\right)^2 + \mathbb{E}\left(B^H_{t_{k+1}}-B^H_{t_k}\right)^2 + \mathbb{E}\left(U^{(D,\gamma)}_{t_{k+1}}-U^{(D,\gamma)}_{t_k}\right)^2
\end{dmath*}
\end{prop}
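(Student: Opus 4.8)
The statement is just the familiar fact that the variance of a sum of mutually independent, centred random variables is the sum of the variances. The plan is therefore to expand the square defining $\varXi_{t_k}$, apply linearity of expectation, and verify that each of the three cross terms vanishes.

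First I would write
\begin{align*}
\left(\varXi_{t_k}\right)^2 &= \sigma^2\left(\Delta B_{t_k}\right)^2 + \left(Z^H_{t_k}\right)^2 + \left(\Delta U^{(D,\gamma)}_{t_k}\right)^2 \\
&\quad + 2\sigma\, \Delta B_{t_k}\, Z^H_{t_k} + 2\sigma\, \Delta B_{t_k}\, \Delta U^{(D,\gamma)}_{t_k} + 2\, Z^H_{t_k}\, \Delta U^{(D,\gamma)}_{t_k},
\end{align*}
and take expectations term by term. Under the standing assumption that $B$ (equivalently $W$), $B^H$ and $U^{(D,\gamma)}$ are mutually independent, each cross term factorises, e.g. $\mathbb{E}\left(\Delta B_{t_k}\, Z^H_{t_k}\right)=\mathbb{E}\left(\Delta B_{t_k}\right)\mathbb{E}\left(Z^H_{t_k}\right)$, and likewise for the other two. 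Every factor is centred: $\mathbb{E}\left(\Delta B_{t_k}\right)=0$ since $W$ is a standard Brownian motion; $\mathbb{E}\left(Z^H_{t_k}\right)=0$ by point 2 of the Proposition on fractional Gaussian noise; and $\mathbb{E}\left(\Delta U^{(D,\gamma)}_{t_k}\right)=0$ because $U^{(D,\gamma)}$ is a centred Gaussian process (a Wiener integral of a deterministic integrand against the centred process $B^H_{a_s}$), hence so are its increments. Consequently the three cross terms all vanish.

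It then remains to collect the surviving terms and to recognise each as one of the increment variances on the right-hand side, using $\Delta B_{t_k}=W_{t_{k+1}}-W_{t_k}$ (the first term being $\mathbb{E}\left(W_{t_{k+1}}-W_{t_k}\right)^2$, the scaling $\sigma$ being carried by the Brownian term as written), $Z^H_{t_k}=B^H_{t_{k+1}}-B^H_{t_k}$, and $\Delta U^{(D,\gamma)}_{t_k}=U^{(D,\gamma)}_{t_{k+1}}-U^{(D,\gamma)}_{t_k}$; this gives the claimed identity. Since the paper emphasises closed-form computability, I would also append the explicit evaluation for an equidistant partition with step $T/n$: $\mathbb{E}\left(W_{t_{k+1}}-W_{t_k}\right)^2=T/n$, $\mathbb{E}\left(B^H_{t_{k+1}}-B^H_{t_k}\right)^2=(T/n)^{2H}$ from definition (\ref{fBm})(iii), and $\mathbb{E}\left(U^{(D,\gamma)}_{t_{k+1}}-U^{(D,\gamma)}_{t_k}\right)^2=\mathcal{C}^{n,T}(0)=2\mathcal{Q}^{n,T}(0)-2\mathcal{Q}^{n,T}(1)$ from Corollary \ref{covariance_increment_OU} together with Proposition \ref{covariance_OU_discrete} (using $\mathcal{Q}^{n,T}(-1)=\mathcal{Q}^{n,T}(1)$ by stationarity).

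There is essentially no obstacle here: the entire content is the vanishing of the cross-covariances, which is immediate from mutual independence and the zero mean of all three building blocks. The only points needing care are bookkeeping of indices and of where the constant $\sigma$ sits, and — if one wants the fully explicit form — checking that the incomplete-beta expression of Proposition \ref{covariance_OU_discrete} is evaluated at the correct lags.
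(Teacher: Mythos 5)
Your proof is correct and is exactly the argument the paper intends (the paper in fact states this proposition with no written proof at all): expand the square and use the assumed mutual independence of $W$, $B^H$ and $U^{(D,\gamma)}$ together with the zero means of all three increments to annihilate the cross terms. The only caveat is notational and is the paper's own, not yours: the stated right-hand side omits the factor $\sigma^2$ that your expansion correctly produces in front of $\mathbb{E}\left(W_{t_{k+1}}-W_{t_k}\right)^2$, so your version is, if anything, the more accurate one.
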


\vspace{5pt}
\begin{remark}
 Because $\varXi$ is stationary its variance may be written in terms of the first increment as 
\begin{dmath*}
  \mathbb{E}\left(\varXi_{t_0} \right)^2 
= \sigma(t_1-t_0)+(t_1-t_0)^{2H}+2\left(\mathcal{Q}^{n,T}(0)-\mathcal{Q}^{n,T}(1)\right) 
= \sigma t_1+t_1^{2H}+2\mathbb{E}\left(U^{(D,\gamma)}_{t_0}\right)^2-2\mathbb{E}\left(U^{(D,\gamma)}_{t_1}U^{(D,\gamma)}_{t_0}\right).
\end{dmath*}
\end{remark}

By construction, we can apply It\^o's formula to our new process $X$.

We give a definition of foward integral due to \cite{Bender1}.
\begin{mydef}
Let $t\leq T$ and $X:\mathbb{R}^+ \longrightarrow \mathbb{R}$ 
be a continuous process. The forward integral of a process
$Y$ with respect to $X$ along equidistant  $\pi_n$ partition of the interval $[0,T]$ such that $|\pi_n|\rightarrow 0$ is
\begin {equation}
 \int _0^t\! Y_s\,\mathrm{d}X_s := \lim_{n\rightarrow \infty} \sum_{k=1}^{n} Y_{t_k} (X_{t_{k+1}}-X_{t_k}) \quad \mathrm{with} \quad t_k\in \pi_n
\end {equation}
when the  $\mathbb{P}$-a.s limit exits.
\end{mydef}

If $X$ is a continuous process with continuous quadratic variation 
$\langle X \rangle_t$ such that $\langle X \rangle_t=\lim_{\|\varLambda\|\to 0}\sum_{k=0}^n(X_{t_{k+1}}-X_{t_k})^2$ $\mathbb{P}$-a.s then
we have the following It\^o's formula according to \cite{Sondermann}

\begin{theorem}
Let $X:[0,\infty)\longrightarrow\mathbb{R}^1$ be a continuous function with continous quadratic variation $\langle X \rangle_t$,
and $f \in \mathcal{C}^{1,2}([0,t]\times \mathbb{R})$ a twice differenciable real function. Then 
\begin{equation*}
f(X_t,t)= f(X_0,0)+\int _{0}^t\! f_s(X_s,s)\,\mathrm{d}s+
 \int _0^t\! f_x(X_s,s)\,\mathrm{d}X_s+\frac{1}{2} \int _0^t\! f_{xx}(X_s,s)\,\mathrm{d}\langle X \rangle_s
\end{equation*}
for any $t\geq 0$.
\end{theorem}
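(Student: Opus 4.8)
The plan is to follow Föllmer's pathwise approach to It\^o's formula referenced in \cite{Sondermann}. Fix once and for all the refining sequence of equidistant partitions $\pi_n$ of $[0,t]$ (the same sequence used in the definition of the forward integral), with points $0=t_0<t_1<\dots<t_n=t$, and recast the hypothesis ``$X$ has continuous quadratic variation'' as follows: along $(\pi_n)$ the step functions $s\mapsto\sum_{t_k\in\pi_n}(X_{t_{k+1}}-X_{t_k})^2\mathbf{1}_{\{t_k\le s\}}$ converge to the continuous non-decreasing function $\langle X\rangle_s$, equivalently the discrete quadratic-variation measures $\mu_n:=\sum_{t_k\in\pi_n}(X_{t_{k+1}}-X_{t_k})^2\,\delta_{t_k}$ converge weakly on $[0,t]$ to the atomless measure $\mathrm{d}\langle X\rangle_s$. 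All limits below are taken along this sequence.

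First I would write the telescoping identity $f(X_t,t)-f(X_0,0)=\sum_{k}\bigl[f(X_{t_{k+1}},t_{k+1})-f(X_{t_k},t_k)\bigr]$ and split each summand, \emph{time first then space}, as $\bigl[f(X_{t_{k+1}},t_{k+1})-f(X_{t_{k+1}},t_k)\bigr]+\bigl[f(X_{t_{k+1}},t_k)-f(X_{t_k},t_k)\bigr]$. For the first bracket the mean value theorem in the time variable gives $f_s(X_{t_{k+1}},\xi_k)(t_{k+1}-t_k)$ with $\xi_k\in(t_k,t_{k+1})$; for the second bracket a second-order Taylor expansion in the space variable at $(X_{t_k},t_k)$ gives $f_x(X_{t_k},t_k)(X_{t_{k+1}}-X_{t_k})+\tfrac12 f_{xx}(X_{t_k},t_k)(X_{t_{k+1}}-X_{t_k})^2+r_k$, where $r_k=\tfrac12\bigl[f_{xx}(\eta_k,t_k)-f_{xx}(X_{t_k},t_k)\bigr](X_{t_{k+1}}-X_{t_k})^2$ for some $\eta_k$ between $X_{t_k}$ and $X_{t_{k+1}}$.

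Then I would pass to the limit term by term. The sum $\sum_k f_s(X_{t_{k+1}},\xi_k)(t_{k+1}-t_k)$ is a Riemann sum of the continuous map $s\mapsto f_s(X_s,s)$, the displaced arguments $X_{t_{k+1}},\xi_k$ being harmless since $X$ and $f_s$ are uniformly continuous on compacts, so it converges to $\int_0^t f_s(X_s,s)\,\mathrm{d}s$. The sum $\tfrac12\sum_k f_{xx}(X_{t_k},t_k)(X_{t_{k+1}}-X_{t_k})^2$ equals $\tfrac12\int g\,\mathrm{d}\mu_n$ with $g(s)=f_{xx}(X_s,s)$ continuous, hence by $\mu_n\Rightarrow\mathrm{d}\langle X\rangle$ it converges to $\tfrac12\int_0^t f_{xx}(X_s,s)\,\mathrm{d}\langle X\rangle_s$. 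For the remainder, $|r_k|\le\tfrac12\,\omega\!\left(\max_k|X_{t_{k+1}}-X_{t_k}|\right)(X_{t_{k+1}}-X_{t_k})^2$ with $\omega$ the modulus of continuity of $f_{xx}$ on a compact neighbourhood of $\{(X_s,s):s\in[0,t]\}$; since $\max_k|X_{t_{k+1}}-X_{t_k}|\to0$ by continuity of $X$ while $\sum_k(X_{t_{k+1}}-X_{t_k})^2$ stays bounded, $\sum_k r_k\to0$. Because the left-hand side and these three sums all converge, the remaining sum $\sum_k f_x(X_{t_k},t_k)(X_{t_{k+1}}-X_{t_k})$ converges as well; its limit is by definition $\int_0^t f_x(X_s,s)\,\mathrm{d}X_s$, and collecting the four limits yields the stated formula (so the argument simultaneously establishes existence of this forward integral).

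The main obstacle is the $\mathrm{d}\langle X\rangle$ term: one must upgrade the bare hypothesis, convergence of the \emph{unweighted} quadratic sums $\sum_{t_k\le s}(X_{t_{k+1}}-X_{t_k})^2\to\langle X\rangle_s$, to convergence of the \emph{weighted} sums $\sum_k f_{xx}(X_{t_k},t_k)(X_{t_{k+1}}-X_{t_k})^2$ against the continuous weight $s\mapsto f_{xx}(X_s,s)$. The clean device is Föllmer's remark that pointwise convergence of the distribution functions to a continuous limit is equivalent to weak convergence of the associated measures, after which continuity of the integrand closes the gap; making this precise from the limit-over-all-partitions hypothesis stated in the paper is the delicate point. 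A secondary point requiring care is that $f\in\mathcal C^{1,2}$ supplies no mixed derivative $f_{xt}$, which is exactly why the increment of $f$ must be split time-first: the alternative ordering would produce a term of the form $\bigl[f_x(X_{t_k},t_{k+1})-f_x(X_{t_k},t_k)\bigr](X_{t_{k+1}}-X_{t_k})$ that is not controllable under the mere $\mathcal C^{1,2}$ assumption.
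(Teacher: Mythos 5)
Your argument is correct: it is precisely F\"ollmer's pathwise proof of the It\^o formula, which is the argument in the cited source \cite{Sondermann} --- the paper itself states this theorem without proof, merely referencing that source. Your time-first splitting of the increment, the upgrade from convergence of the unweighted quadratic sums to weak convergence of the discrete quadratic-variation measures (and hence convergence of the weighted sums against the continuous integrand $f_{xx}(X_s,s)$), and the observation that existence of the forward integral $\int_0^t f_x(X_s,s)\,\mathrm{d}X_s$ is obtained as a by-product of the convergence of the other three terms are all exactly the standard treatment.
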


\begin{remark}
 Note that the integral $\int _{0}^t\! f_x(X_s,s)\,\mathrm{d}X_s$ is understood as a forward integral along the partition $\pi_n$.
\end{remark}

\section{Temporal aggregation}

We shortly present the impact of temporal aggregation on a FARIMA process.
\begin{mydef}
 Let $n=mT$ with $m\geq 2$ and $L$ the lag operator, then the series 
\begin{equation}
 Y_T=\left( \sum_{i=0}^{m-1} L^i\right)y_{mT}
\end{equation}
 represents the m-period nonoverlapping aggregates of $y_n$.
\end{mydef}

A FARIMA(p,d,q) process follows the equation

\begin{equation}
\label{farima_model}
 \Phi_p(L)(1-L)^d y_n=\Theta_q(L)\varepsilon_n
\end{equation}

The original process and the aggregated one are linked via a polynomial. 
We multiply both sides of equation (\ref{farima_model}) by the polynomial

\begin{equation*}
\prod_{j=1}^p \left[ \frac{(1-\delta^m L^m)}{(1-\delta L^m)}\right]\left(\frac{1-B^m}{1-B}\right)^{d+1}
\end{equation*}
    
As a result, the aggregate series $Y_T$ follows a FARIMA(p, d, N) with
\begin{equation}
 N=\left( p+d+1+\frac{q-p-d-1}{m} \right)
\end{equation}

and autocovariance function as 

\begin{equation}
 \gamma_Y(j)=\left( \sum_{i=0}^{m-1} L^i \right)^{2d+1}\gamma_y(mj+(d+1)(m-1))
\end{equation}
for further details cf.\cite{Tesler, Wei, Stram}.

Conversely, the mixed fractional Gaussian noise has variance and autocovariance depending 
on the lenght interval $T$ and the sampling frequency $n$ as 

\begin{dmath*}
  \mathbb{E}\left(\varXi_{\frac{kT}{n}} \right)^2  
= \sigma \frac{T}{n}+\left(\frac{T}{n}\right)^{2H}+\mathrm{const}
\end{dmath*}
 
and

\begin{dmath*}
  \mathbb{E}\left(\varXi_{\frac{kT}{n}}\,\varXi_{\frac{jT}{n}} \right)^2  = \frac{1}{2}
\left(\frac{T}{n}\right)^{2H}\left(|(k-j)+1|^{2H}-2|k-j|^{2H}+|(k-j)-1|^{2H}\right)+\mathcal{C}^{n,T}(k-j)
\end{dmath*}

with $H>\frac{1}{2}$ respectively.

With finite length aggregation, the autocovariance structure of the aggregates would depend on the exact autocovariance
structure of the mixed fractional Gaussian noise.

\section{Hedging and Expected Shortfall for Options}

The need to quantify risk arises in many different contexts and has been strongly motivated by the 
fear of systemic risk, i.e. the danger that problems in a single financial institution 
may spill over and, in extreme situations, disrupt the normal functioning of the entire 
financial system.

Lessons learned from the global banking crisis are now spotlight in a review of risk management 
at all levels within financial institutions and regulatory authorities. Solvency II, Europe’s risk-based reform of 
insurance regulation, and Basel III, a global regulatory framework for banks on capital adequacy, leverage ratios 
and liquidity standards, will fundamentally shift the focus of the financial industry for many years to come. 
A central issue is the measurement of risk. Among the existing approches, Basel III mentions Value-at-Risk
(VaR) for raising capital requirements for the trading book and complex securitisation exposures,and Solvency II 
the related notion of expected shortfall (ES) is used in the definition of target capital. ES captures the skewed and
heavy-tailed pay-off functions. 

Calculation of VaR and ES essentially consist of determining the loss distribution function $F_X(x)=P(X\leq x)$ , 
or functionals describing this distribution function such as its mean, and variance. In order to achive this, 
a proper calibrated model is needed which captures the main features of the dynamics of the value of a financial potfolio.

The effects of driving stochastic processes mixed fGn and FARIMA are compared on the forecast of risk measures (VaR, ES) of 
a financial position.

\subsection{Characterization of risk measures}
In this paper, we pay attention to those measures applied in the framework of Basel III and Solvency II.
\begin{mydef}
For a financial position $X$ with distribution $\mathbb{P}$, we define its Value-at-Risk at level $\alpha$ $(VaR^{\alpha})$ as
\begin{equation}
 VaR^\alpha(X):=-q_X (\alpha) = \mathrm{inf}\{ m \;|\; \mathbb{P}(X\leq m) \geq \alpha\}
\end{equation}
\end{mydef}

where $q_X(\alpha)$ is the quantile function of $X$. From a point of a practitioner, $VaR$ is the maximum 
loss he may expect over a given holding or horizon period, with a certain level of confidence.

However, the subadditivity property fails to hold for $VaR$ in general, so $VaR$ is not a
coherent risk measure. For subadditivity mesure diversification always leads to risk reduction, while for measures which fail
this condition, diversification may increase in their value; cf. \cite{Acerbi}.

One posibility of a coherent measure which is defined in terms of $VaR$ would be Conditional Value at Risk or Expected Shortfall.  
\begin{mydef}
 Let $X$ be the financial position on a specified time horizon $T$ and some specified probability level 
$\alpha\in (0,1)$. The Excepted Shortfall is then defiend as  
\begin{equation}
ES^\alpha = \frac{1}{\alpha}\int^\alpha_0 VaR^p(X) \; \mathrm{d}p
\end{equation}
\end{mydef}

\begin{remark}
 If the distribution function of $X$ is continuous then it can be shown that $ES^{\alpha}= \mathbb{E}(X|X\leq VaR^\alpha)$.
\end{remark}

\subsection{Hedging and elimination of randomness}

The dynamics of the price of an underlying asset of a derivative product may be modelled according to a mixed fractional 
Gaussian process as

\begin{equation}\label{underlying}
dS_t=\mu S_{t}dt+\sigma S_{t}dX_t 
\end{equation}

Therefore, the dynamics of the option price is a function $F(S_t,t)\in \mathcal{C}^{1,2}([0,t]\times\mathbb{R})$ and according to It\^o's formula

\begin{dmath}\label{Ito}
F(S_t,t)=\frac{\partial F}{\partial t}dt+\frac{\partial F}{\partial S_t}\left(\mu S_{t}dt+\sigma S_{t}dX_t\right)+\frac{1}{2} \sigma^2 S_{t}^2 \frac{\partial^2 F}{\partial S_t^2}dt\\
= \sigma S_{t}\frac{\partial F}{\partial S_t}dX_t+\left(\mu S_{t}\frac{\partial F}{\partial S_t}+\frac{1}{2} \sigma^2 S_{t}^2\frac{\partial^2 F}{\partial S_t^2}+\frac{\partial F}{\partial t}\right)dt
\end{dmath}

Let $\xi_t$ represent the value of a portfolio of one option with value $C(S_t,t)$ and $-\eta_t$ underlying 
stocks with price $S_t$. 
The minus sign of $\eta_t$ means we hold a short position in the underlying asset. Therefore, the value of the portfolio 
at time $t$ is $\xi_t=C_t-\eta_t S_t$. 
We can write
\begin{dmath*}
 d\xi_t=dC_t-\eta_t\, dS_t = \sigma S_{t}\frac{\partial C}{\partial S_t}dX_t+\left(\mu S_{t}\frac{\partial C}{\partial S_t}+\frac{1}{2} \sigma^2 S_{t}^2 \frac{\partial^2 C}{\partial S_t^2}+\frac{\partial C}{\partial t}\right)dt
-\eta_t\left(\mu S_{t}dt+\sigma S_{t}dX_t \right)
=  \sigma S_{t}\left(\frac{\partial C}{\partial S_t}-\eta_t\right)dX_t+\left( \mu S_{t}\left(\frac{\partial C}{\partial S_t}-\eta_t\right)+\frac{1}{2} \sigma^2 S_{t}^2\frac{\partial^2 C}{\partial S_t^2}+\frac{\partial C}{\partial t}\right)dt
\end{dmath*}

where we have substituted equations (\ref{Ito}) and (\ref{underlying}) into $dC(S_t,t)$ and $dS_t$ respectively. 
Now, if

\begin{equation*}
 \eta_t=\frac{\partial C(S_t,t)}{\partial S_t}
\end{equation*}
we eliminate the randonmess of the porfolio and by fact that $C(S_t,t)$ and $S_t$ are correlated 
implies that the option price will change by
\begin{equation}\label{delta}
 dC_t = \eta_t  \,dS_t
\end{equation}
respect to the underlying price at time $t$ \cite{Wilmott}.
 
\begin{ex}

Suppose we have a portfolio with one option and one stock with value $\xi_t=C_t+S_t$. Using 
the delta approximation (\ref{delta}) its value is a linear function of $S_t$ alone as $\xi_t=(\eta_t + 1) S_t$ and any change 
is given by $d\xi_t=(\eta_t + 1) dS_t$ at any $t$.
We assume the distribution of the returns for the stock to be normal,
\begin{equation*}
 r_{u,t}\sim N(\mu_{u,t},\sigma_{u,t}^2)
\end{equation*}
where $\mu_{u,t}$ is the conditional mean calculated according to equation (\ref{cond_mean}) and $\sigma_{u,t}^2$ 
is the conditional variance calculated by equation (\ref{cond_variance}). Therefore, the distribution of the returns of the 
portfolio is also normally distributed as  

\begin{equation*}
 r_{p,t}\sim (\eta_t +1)N\left(\mu_{u,t},\sigma_{u,t}^2\right)
\end{equation*}

Let us denote the value at risk on the underlying by $VaR_u^\alpha$, where $\alpha$ is the confidence level, then

\begin{equation}
VaR_u^\alpha=\mu_{u,t}S_{t-1}+\sigma_{u,t} \Phi^{-1}(\alpha)S_{t-1}
\end{equation}
Let us denote the value at risk on the portfolio by $VaR_p^{\alpha}$, where $\alpha$ is probability and recall that Var is a quantile of the loss distribution of the portfolio then:
\begin{eqnarray*}
 \alpha&=&\mathbb{P}(\xi_t-\xi_{t-1}\leq VaR_p^\alpha)\\
&=&\mathbb{P}((\eta_t+1)\,(S_t-S_{t-1})\leq VaR_p^\alpha)\\
&=&\mathbb{P}((\eta_t+1)S_{t-1}(e^{r_{u,t}}-1)\leq VaR_p^\alpha)\\
&=&\mathbb{P}((\eta_t+1)S_{t-1}(e^{r_{u,t}}-1)\leq VaR_p^\alpha)\\
&=&\mathbb{P}\left(\frac{r_{u,t}-\mu_{u,t}}{\sigma_{u,t}}\leq \left[\log\left(\frac{1}{(\eta_t+1)}\frac{VaR_p^\alpha}{S_{t-1}}+1\right)-\mu_{u,t}\right]\frac{1} {\sigma_{u,t}}\right)\\
\end{eqnarray*}
Using the normality assumption of returns
\begin{equation*}
\Phi^{-1}(\alpha)=\left[\log\left(\frac{1}{(\eta_t+1)}\frac{VaR_p^\alpha}{S_{t-1}}+1\right)-\mu_{u,t}\right]\frac{1} {\sigma_{u,t}}
\end{equation*}

for small $\frac{1}{(\eta_t+1)}\frac{VaR_p^\alpha}{S_{t-1}}$ we can use the approximation $log(1+x)\approx x$.

Hence, the value at risk for one unit of the portfolio at confidence level $\alpha$ is:
\begin{equation*}
 VaR_p^\alpha = (\eta_t+1)\, \sigma_{u,t} \Phi^{-1}(p)S_{t-1}+\mu_{u,t} (\eta_t+1)\,S_{t-1} =(\eta_t+1)VaR_u^\alpha
\end{equation*}

If we assume that the underlying stock log price is modelled by a mixed fGn process (\ref{underlying}) then
the excepted shortfall for the portfolio at level $\alpha$ is 

\begin{equation}
 ES_p^\alpha=-\left(\mu_{u,t}+\frac{1}{\alpha}\sigma_{u,t}\,\phi\left((\eta_t+1)VaR_{u}^\alpha \right)\right)\xi_{t-1}.
\end{equation}

Now, $\sigma_{u,t}$, which is a function of Hurst and gamma parameter, 
can be evaluated via proposition (\ref{mixed_variance}). Note that we used the result that mixed $fGn$ has continuous quadratic 
variation equal to $\langle B \rangle_t$ so we were able to use Ito's formula to 
justifie the linear approximation of the increment value of the option.

\end{ex}

\section{Backtesting study}

We check if mixed models are good altervative to Farima models to forecast risk of financial data which exhibits 
short-long range dependences. 

\subsection{Prediction of conditional mean and variance at time $n+k$}

We assume our random variables are jointly gaussian. We denote the best linear predictor of $X_{n+k}$ 
as $\widehat{X}_{n+k}=\sum_{i=1}^n a_{i,k} X_i$ and use the mean squared error (MSE) as our criterio, 
$\lVert X_{n+k}-\widehat{X}_{n+k}\rVert _{L^2}= E((X_{n+k}-\widehat{X}_{n+k})^2)$. 
Assuming that the process is weakly stationary, let $\widehat{x}_{n+k}$ denote the minimum mean square error linear predictor of 
$x_{n+k}$ given the data $\bar{X}'=(x_1,\ldots,x_n)$, 
the mean $\mu$ and the autocovariances $\gamma_l$, with $l=0,\ldots,n-1$. 

\begin{equation}
\label{cond_mean}
 \widehat{x}_{n+k}=\mu+g_k'\Gamma^{-1}_n(\bar{x}-\mu)
\end{equation}

where $g'_k=(\gamma_{n+k-1},\ldots,\gamma_k)$ and, by the law of total variance, 
the conditional variance for the forecast is

\begin{equation}
\label{cond_variance}
 V_{n+k}=\gamma_0-g'_k\Gamma^{-1}_n g_k
\end{equation}

We compute the preditors of mean and variance by means of  Durbin-Levinson algorithm \cite{Golub}.

\subsection{Comparation of risk model performance.}

Assement of the accuracy of the expected shortfall forecasts should ideally be done by 
monitoring the performance of the model in the future. However, it is expected that violations are only observed 
rarely and a long period of time would be required. Backtersting is a procedure used to compare risk model performances 
over a period in the past.

In our study, we are not concern with the estimation of the parameters of the models but to compare their performances.
Therefore, we simulate data from a FARIMA model. This is two fold, first it allows us to control the 
dependences of the data and second use the FARIMA forecast of risk as a benchmark to evaluate the performance of the that
of the mixed model.

We assume the model parameters are fixed except the gamma parameter which is calibrated so the predicted conditional 
variance of the mixed model approximates that of the FARIMA. We assume that the data is independent. This is questionable
assumption since we are concern with the correlation in the data but we may get an inside of the validation of the modeling
as a first approximation. Future reasearch with more formal test of violation ratios 
would be need to obtain a better conclusion.

We processed with the calculation of the autocovariance functions according to equations (\ref{autocovariance_farima}),
(\ref{autocovariance_fOU2}) and proposition (2.1-4). The Expected Shortfall and Value at risk is then evaluated
using section (2.1-2) of the general theory.


We analyze the results by means of violation ratios and volatility. If the return on a particular day exceeds the 
forecast, then we count it as a violation. Let $\varsigma$ be a Bernoulli random variable with probability
the risk level of the ES, where $\varsigma_1$ is the 
number of violations and $\varsigma_0$ is the number of non violations then the violation ratio is:

\begin{equation}
\varPsi = \frac{\varsigma_1}{\mathbb{E}(\varsigma)}
\end{equation}

As a result, numerical results are presented in the next Table

\begin{table}[ht]
\begin{center}
\begin{tabular}{ccc}
\toprule
Model & Ratio & Ratio volatility \\
\hline
Farima & 3.6 & 9.2e-06 \\
fGn & 4.0 & 8.6e-06 \\
Mixed & 3.6 & 9.2-06 \\
\bottomrule
\end{tabular}
\end{center}
\end{table}

None of the models perform well...

\section{Appendix A - Calculation of $fOU_2$ autocovariance}

We start the calculation of the $fOU_2$ kernel representation from proposition (\ref{covariance_OU}).

Recall that the process $U^{(D,\gamma)}$ was defined in (\ref{U}) as 
\begin{eqnarray*}
U_t^{(D,\gamma)}&=& e^{-\gamma t} \int_{-\infty}^t \! e^{(\gamma-1)s} \,\mathrm{d}Z_{a_s}\\
&=& H^{-(\gamma-1)H}e^{-\gamma t}\int_0^{a_t} \!s^{(\gamma-1)H}\,\mathrm{d}Z_s
\end{eqnarray*}
where $a_t:= a(t,H):=He^{t /H}$ and $\gamma > 0$. A change of variable was made as $s=He^{s/H}$.

To calculate the integral, we start defining the constant $C_1\equiv H(2H-1)$ then
\begin{dmath*}
C_1e^{-\gamma(t+s)}\int_{-\infty}^t \int_{-\infty}^s\! H^{2(H-1)}\frac{e^{(\gamma-1+\frac{1}{H})(u+v)}}{\left|
e^{u/H}-e^{v/H}\right|^{2(1-H)}} \, \mathrm{d}u\, \mathrm{d}v \\
=C_1e^{-\gamma(t+s)}\int_{-\infty}^t \int_{-\infty}^s\!e^{(\gamma-1)(u+v)}e^{\frac{1}{H}(u+v)}\left(H\left|
e^{u/H}-e^{v/H}\right|\right)^{2(H-1)}\,\mathrm{d}u\,\mathrm{d}v \\
=H^{-2(\gamma-1)H}C_1e^{-\gamma(t+s)}\int_{-\infty}^t \int_{-\infty}^s\!H^{2(\gamma-1)H}e^{(\gamma-1)\frac{(u+v)H}{H}}e^{\frac{u}{H}}e^{\frac{v}{H}}
\times \left(\left|He^{u/H}-He^{v/H}\right|\right)^{2(H-1)}\,\mathrm{d}u\,\mathrm{d}v \\
=H^{-2(\gamma-1)H}C_1e^{-\gamma(t+s)}\int_{-\infty}^t \int_{-\infty}^s\!\left(He^{\frac{u}{H}}He^{\frac{v}{H}}\right)^{(\gamma-1)H}e^{\frac{u}{H}}e^{\frac{v}{H}}
\times \left(\left|He^{u/H}-He^{v/H}\right|\right)^{2(H-1)}\,\mathrm{d}u\,\mathrm{d}v \\
\end{dmath*}

Next, we make a change of variable $m=He^{u/H}$ and $n=He^{v/H}$. The constant term is now
 $C_2=H^{-2(\gamma-1)H}C_1$. Hence,
\begin{dmath*}
 =C_2e^{-\gamma(t+s)}\int_{0}^{a_t} \int_{0}^{a_s}\! (mn)^{(\gamma-1)H} |m-n|^{2(H-1)} \,\mathrm{d}m\,\mathrm{d}n \\
 = C_2e^{-\gamma(t+s)} \left(\int_{0}^{a_s} \int_{0}^{a_s}\! (mn)^{(\gamma-1)H} |m-n|^{2(H-1)} 
\,\mathrm{d}m\,\mathrm{d}n + \int_{a_s}^{a_t} \int_{0}^{a_s}\! (mn)^{(\gamma-1)H} |m-n|^{2(H-1)} 
\,\mathrm{d}m\,\mathrm{d}n \right)\\
= C_2e^{-\gamma(t+s)} \left(2 \int_{0}^{a_s} \int_{0}^{m}\! (mn)^{(\gamma-1)H} |m-n|^{2(H-1)} 
\,\mathrm{d}m\,\mathrm{d}n + \int_{a_s}^{a_t} \int_{0}^{a_s}\! (mn)^{(\gamma-1)H} |m-n|^{2(H-1)} 
\,\mathrm{d}m\,\mathrm{d}n \right)\\
\end{dmath*}

We continue by making a second change of variable, $\theta=\frac{n}{m}$, with the result

\begin{dmath*}
 =C_2e^{-\gamma(t+s)}\left(2\int_{0}^{a_s}\! m^{2\gamma H-1}\int_{0}^{1}\! \theta^{(\gamma-1)H} |1-\theta|^{2(H-1)}\,\mathrm{d}\theta\,\mathrm{d}m
+ \int_{a_s}^{a_t}\! m^{2\gamma H-1}\int_{0}^{a_s/m}\! \theta^{(\gamma-1)H} |1-\theta|^{2(H-1)}\,\mathrm{d}\theta\,\mathrm{d}m\right) \\
=C_2e^{-\gamma(t+s)}\left(2\; \mathbf{B}((\gamma-1)H+1,2H-1)\int_{0}^{a_s}\! m^{2\gamma H-1}\,\mathrm{d}m
+ \int_{a_s}^{a_t}\! m^{2\gamma H-1}\mathbf{B}\left(a_s/m;(\gamma-1)H+1,2H-1 \right) \,\mathrm{d}m\right) \\
\end{dmath*}

Finally, we obtain the desire result as
\small
\begin{adjustwidth}{-0.1in}{-0.5in}
\begin{dmath*}
\mathbb{E}(U^{(D,\gamma)}_t,U^{(D,\gamma)}_s)= H^{-2(\gamma-1)H}H(2H-1)e^{-\gamma(t+s)}\biggl(\frac{a_s^{2\gamma H}}{\gamma H} \mathbf{B}((\gamma-1)H+1,2H-1)\\
+\int_{a_s}^{a_t}\! m^{2\gamma H-1}\mathbf{B}(a_s/m;(\gamma-1)H+1,2H-1) \,\mathrm{d}m\biggr)\\
\end{dmath*}
\end{adjustwidth}
\normalsize

\section{Appendix B - Figures}

\begin{figure}[H]
\begin{center}
\includegraphics[width=\textwidth]{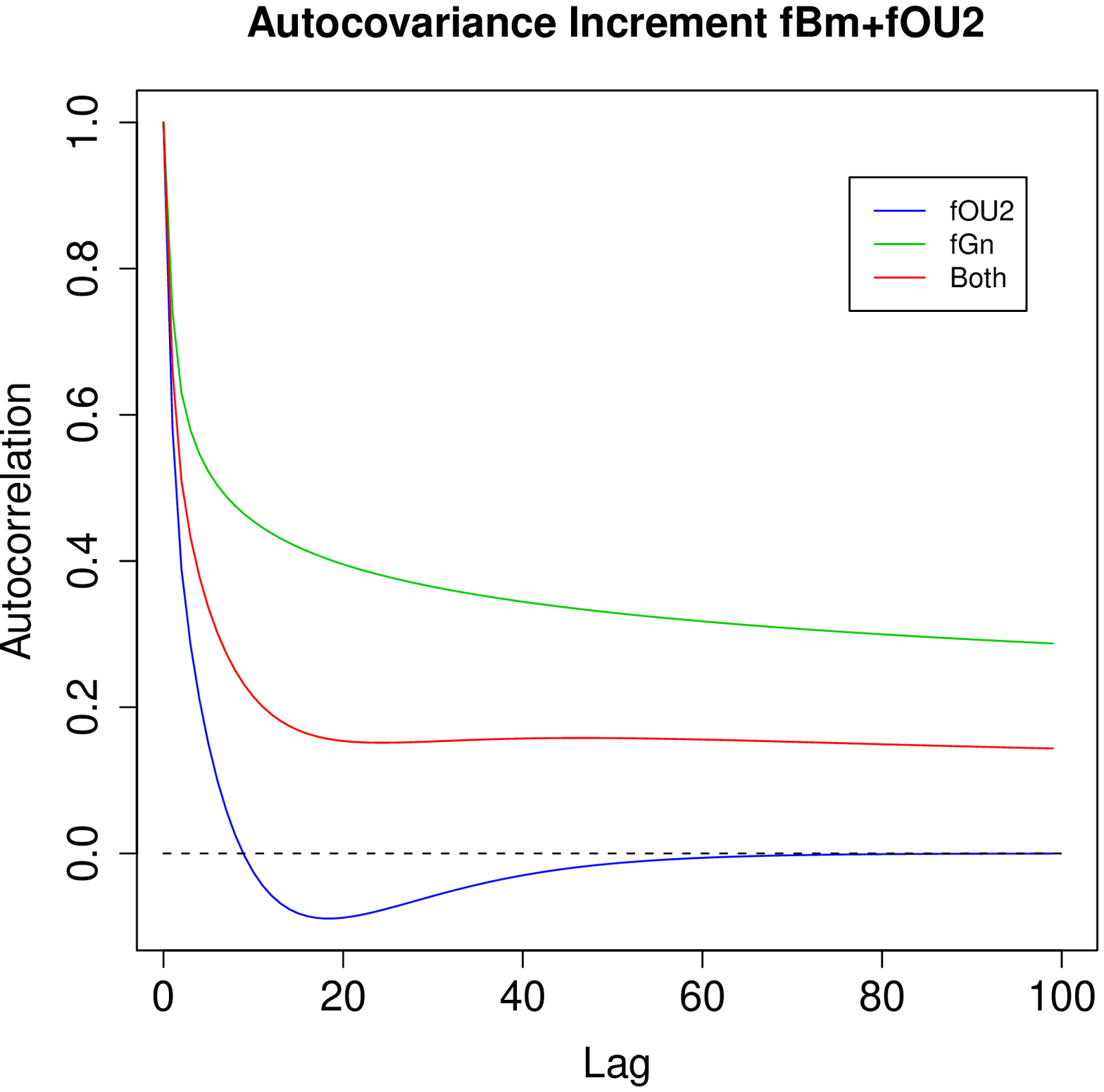}
\caption{\small Autocovariance effect of a fractional Ornstein-Uhlenbeck 
process of second kind $U_t^{(D,\gamma)}$ with parameters $\gamma=0.1$ and $H=0.9$ in the autocovariance function of
a fractional Gaussian noise with parameter $H=0.9$}.
\end{center}
\label{Autocovariance_mixed_model_1}
\end{figure}

\begin{figure}[H]
\begin{center}
\includegraphics[width=0.5\textwidth]{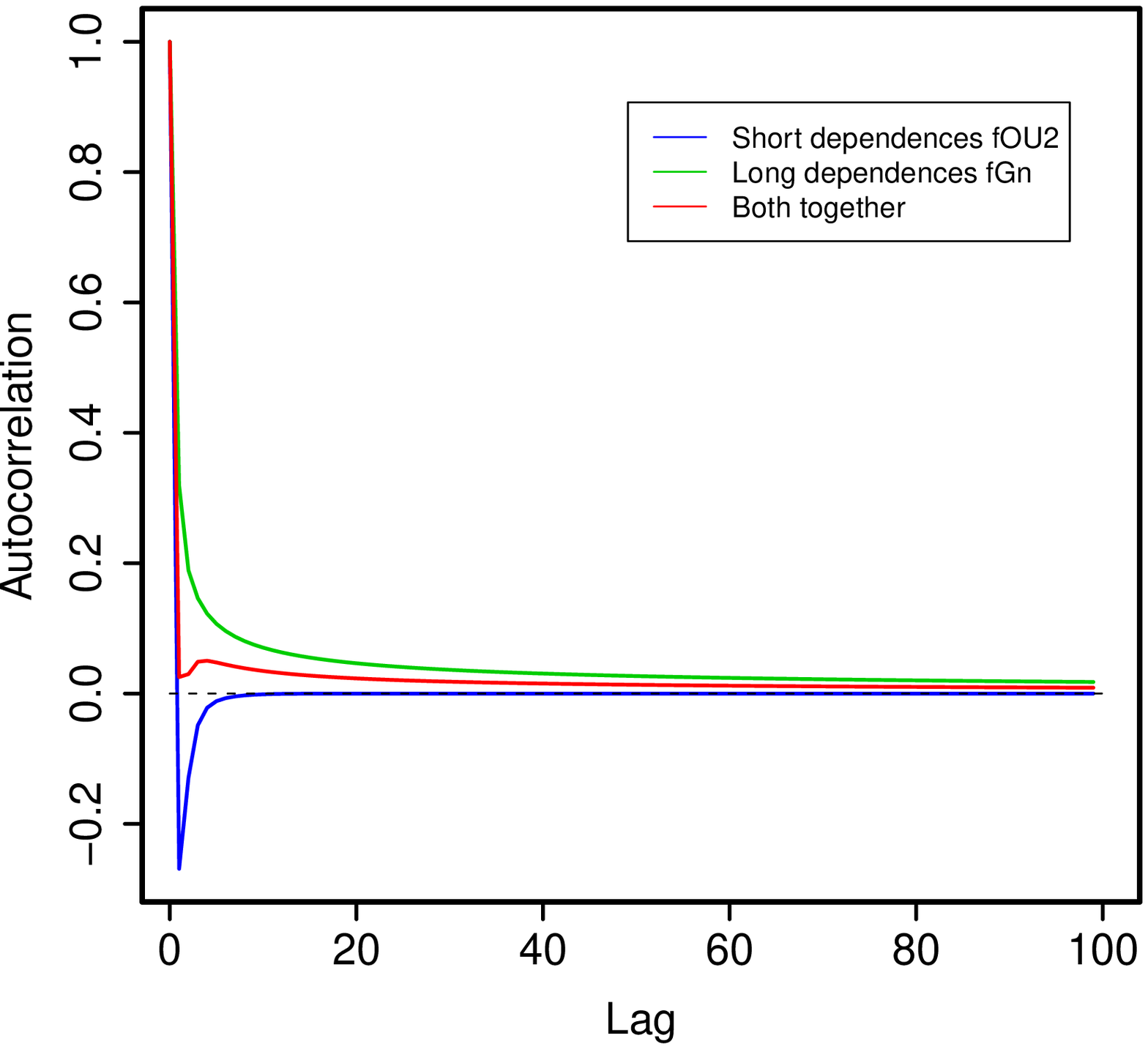}
\caption{\small Autocovariance effect of a fractional Ornstein-Uhlenbeck 
process of second kind $U_t^{(D,\gamma)}$ with parameters $\gamma=1.5$ and $H=0.7$ in the autocovariance function of
a fractional Gaussian noise with parameter $H=0.7$}.
\end{center}
\label{Autocovariance_mixed_model_2}
\end{figure}

\begin{figure}[H]
\label{Autocovariances_fOU2}
\begin{center}
\includegraphics[width=1\textwidth,height=0.3\textheight]{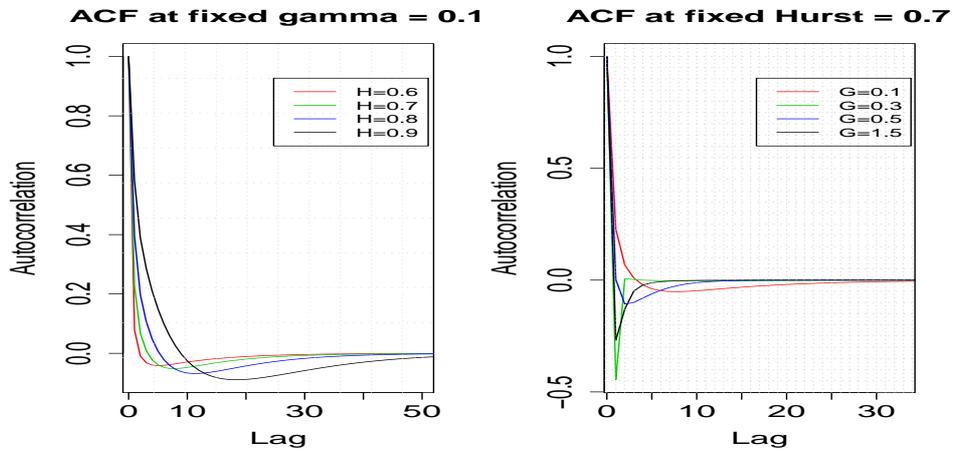}
\caption{\small In the left picture, autocovariance functions of an increment fractional Ornstein-Uhlenbeck 
process of second kind $U_t^{(D,\gamma)}$ at different $H$ and fixed $\gamma = 0.1$. The second picture shows
the autocovariances of the same process with fixed Hurst parameter $H = 0.7$ at different gammas G.} 
\end{center}
\end{figure}

\addcontentsline{toc}{chapter}{\numberline{}Bibliography}

\end{document}